\newcommand{\N}{\ensuremath{\mathbb{N}}}
\newcommand{\PP}{\ensuremath{\mathcal{P}}}
\newcommand{\E}{\ensuremath{\mathbb{E}}}
\newcommand{\I}{\ensuremath{\mathbb{I}}}
\newtheorem{theorem}{Theorem}
\newtheorem{lemma}{Lemma}
\newtheorem{example}{Example}
\newcommand{\ps}{${}^{\sim}$}
\newcommand{\Cn}{C(X_{1:n})}
\newcommand{\CnRL}{C_\text{RL}(X_{1:n})}
\newcommand{\CnS}{C_\text{S}(X_{1:n})}
\newcolumntype{L}[1]{>{\raggedright\let\newline\\\arraybackslash\hspace{0pt}}m{#1}}
\newcolumntype{C}[1]{>{\centering\let\newline\\\arraybackslash\hspace{0pt}}m{#1}}
\newcolumntype{R}[1]{>{\raggedleft\let\newline\\\arraybackslash\hspace{0pt}}m{#1}}
\title{Implementing Monte Carlo Tests with P-value Buckets}
\author{Axel Gandy, Georg Hahn and Dong Ding
  \\Department of Mathematics, Imperial College London}
\date{}
\begin{document}
\maketitle

\begin{abstract}
Software packages usually report the results of statistical tests
using p-values. Users often interpret these by comparing them to
standard thresholds, e.g.\ 0.1\%, 1\% and 5\%, which is sometimes
reinforced by a star rating (***, **, *).
We consider an arbitrary statistical test whose p-value $p$ is not
available explicitly, but can be approximated by Monte Carlo
samples, e.g.\ by bootstrap or permutation tests. The standard
implementation of such tests usually draws a fixed number of samples
to approximate $p$. However, the probability that the exact and the
approximated p-value lie on different sides of a threshold
(the resampling risk) can be high, particularly for p-values
close to a threshold. We present a method to overcome this.
We consider a finite set of user-specified intervals which cover
$[0,1]$ and which can be overlapping. We call these p-value
buckets. We present algorithms that, with arbitrarily high probability,
return a p-value bucket containing $p$.
We prove that for both a bounded resampling risk and a finite runtime,
overlapping buckets need to be employed, and that our methods both bound
the resampling risk and guarantee a finite runtime for such overlapping buckets.
To interpret decisions with overlapping buckets, we propose an extension of the star rating system.
We demonstrate that our methods are suitable for use in standard software,
including for low p-value thresholds occurring in multiple testing settings,
and that they can be computationally more efficient than standard implementations.
\end{abstract}
\textit{Keywords:}
Algorithm, Bootstrap, Hypothesis testing, P-value, Resampling, Sampling

\section{Introduction}
\label{section_introduction}
Software packages usually report the significance of statistical tests
using p-values.  The result of the test will often be interpreted by
comparing those p-values to thresholds. To facilitate this,
many tests in statistical software such as \textit{R}
\citep{Rsoftware}, \textit{SAS} \citep{SASsoftware} or \textit{SPSS}
\citep{SPSSsoftware} translate the significance to a star rating
system, in which typically $p\in (0.01,0.05]$ is denoted by *,
$p\in (0.001,0.01]$ is denoted by ** and $p\leq 0.001$ is denoted by ***.
As pointed out in the literature, such levels of significance are sensible since
they capture the magnitude of a p-value rather than its precise value,
which in contrast to the magnitude is usually not reliably estimated \citep{BoosStefanski2011}.

In this article, we are concerned with statistical tests whose p-value $p$ can only
be approximated by sequentially drawn Monte Carlo samples. Among others, this
scenario arises in bootstrap or permutation
tests \citep{Lourenco2014,Martinez2014,Liu2013,Wu2013,Asomaning2012,Dazard2012}.

Standard implementations of Monte Carlo tests in software packages
usually take a fixed number of samples and estimate $p$ as the
proportion of exceedances over the observed value of the test statistic.
Examples of this  include the computation of a bootstrap
p-value inside the function \textit{chisq.test} in \textit{R} or the
function \textit{t-test} in \textit{SPSS}. However, there is no control of the
\textit{resampling risk}, the probability that the exact and the
approximated p-value lie on two opposite sides of a testing threshold
(usually $0.1\%$, $1\%$ or $5\%$).

Sequential methods to approximate p-values have been studied in the
literature.  Early works provided ad hoc attempts to reduce the
computational effort without focusing on a specific error criterion
\citep{BesagClifford1991,Silva2009}.

Further developments aimed at a uniform bound on the resampling
risk for a single threshold
\citep{DavidsonMacKinnon2000,AndrewsBuchinsky2000,AndrewsBuchinsky2001,Gandy2009}.
\cite{Gandy2009} shows that such a uniform bound necessarily
results in an infinite runtime.

There are also approaches that aim to bound an integrated resampling risk for a single threshold
\citep{FayFollmann2002,Kim2010,SilvaAssuncao2013}.
Such an error criterion is weaker than a uniform bound on the resampling risk and can be
achieved with finite effort. 

In this article, we present algorithms that work with multiple thresholds, aim for uniform bounds on the resampling risk
and, under conditions, have a finite runtime.
We first generalize testing thresholds to a finite set of user-specified
intervals (called ``p-value buckets'') which cover $[0,1]$
and which can be overlapping.  Our algorithms return one of those p-value
buckets which is guaranteed to contain the unknown (true) $p$ up to a uniformly bounded error.

We prove that methods achieving both a finite runtime and a bounded resampling risk need to operate on overlapping p-value buckets.
In order to report decisions computed with overlapping buckets,
we propose to use an extension of the classical star rating system (*, **, ***) used to indicate the significance of a hypothesis.

Our methods rely on the computation of a confidence sequence for $p$,
i.e. a sequence of random intervals with a joint coverage probability.
We present two approaches to compute such a confidence sequence, prove
that both approaches indeed bound the resampling risk and achieve a
finite runtime for overlapping buckets.  We compare both approaches in
a simulation section and demonstrate that they achieve a competitive
computational effort which is close to a theoretical lower bound on
the effort we derive.

The article is structured as follows.
Section~\ref{sec:generalidea} introduces the mathematical setting of our article (Section~\ref{section_setting}),
the rationale behind overlapping p-value buckets (Section~\ref{sec:overlapping}),
our proposed extension of the traditional star rating system (Section~\ref{sec:starrating}) and
a general algorithm to compute a decision for $p$ with respect to a set of p-value buckets (Section~\ref{section_general_algorithm}).
The general algorithm relies on the construction of certain confidence sequences for $p$ for which we present two approaches:
one based on likelihood martingales \citep{Robbins1970,Lai1976} in Section~\ref{sec:robbinslai}
and one based on the \textit{Simctest} algorithm \citep{Gandy2009} in Section~\ref{sec:simctest}.
In Section~\ref{sec:effort} we first derive a theoretical lower bound on the expected effort (Section~\ref{sec:lowerbound})
and demonstrate that our methods achieve a computational effort which stays within a multiple of the optimal effort
(Sections~\ref{section_effortplots} and \ref{section_three_distributions}).
An application to multiple testing is considered in Section~\ref{sec:multiple_testing}.
The article concludes with a discussion in Section~\ref{section_discussion}.
All proofs can be found in Appendix~\ref{section_proofs}.
The Supplementary Material includes $R$ code to implement the algorithms as well as to reproduce all figures and tables. We have also implemented the method in the function {\it mctest} of the R-package {\it simctest}, which is available on CRAN.

\section{General algorithm}
\label{sec:generalidea}
\subsection{Setting}
\label{section_setting}
We consider one hypothesis $H_0$ which we would like to test with a
given statistical test.  Let $T$ denote the test statistic and let $t$
be the evaluation of $T$ on some given data. For simplicity, we assume
that $H_0$ should be rejected for large values of $t$.  In this case the p-value
is commonly computed  as  the probability of observing a
statistic at least as extreme as $t$, i.e.\
\begin{align}
p = \PP(T \geq t),
\label{eq:pvalue}
\end{align}
where $\PP$ is a probability measure under the null hypothesis. For
our purposes, we assume that $\PP$ is either the true distribution of
$T$ under $H_0$ or an estimate of it, e.g.\ the distribution implied
via bootstrapping.

We assume that the p-value $p$ is not available analytically but can
be approximated using Monte Carlo simulation by drawing independent
realizations of the test statistic $T$ under $\PP$. We will assume that
we can generate a sequence $X_i, i\in \N$, of those draws and we let
$X_i=1$ if the $i$th replicate is greater or equal than $t$ and
$X_i=0$ otherwise. As a consequence, $X_i$ have a Bernoulli$(p)$ distribution.

The algorithms we consider aim to return an interval containing $p$
from a given set ${\cal J}$ of possibly overlapping sub-intervals of $[0,1]$.
The algorithms are sequential; for $n=1,2,\dots$, based on 
$X_{1:n}=(X_1,\dots,X_n)$, they will decide if they can stop and return an interval or whether they need to observe more $X_i$.

We use $A$ to denote a generic algorithm of this type, $I_A$ (or
simply $I$) to denote the interval returned by $A$, and $\tau_A$ for
the stopping time of $A$, i.e.\ the number of $X_i$ that the
algorithm observes before returning $I_A$.
We use $J$ for generic elements of ${\cal J}$.
Our algorithms are built on the sequence $(X_i)$, which have $p$ as the unknown parameter.
To emphasize that the underlying distribution is determined by $p$,
we will use it as a subscript in the notation of probabilities and expected values, writing $\PP_p$ and $\E_p$.

Formally, we require ${\cal J}$ to be a \textit{set of p-value
buckets}, which we define to be a set of sub-intervals of $[0,1]$ of
positive length that cover $[0,1]$, i.e.\
$\bigcup_{J\in {\cal J}} J=[0,1]$.

For example,
\begin{equation}
  \label{eq:defJ0}
{\cal J}^0:=\{[0,10^{-3}],(10^{-3},0.01],(0.01,0.05],(0.05,1]\}
\end{equation}
is a set of p-value
buckets, which we will refer to in the remainder of the article as \textit{classical buckets}.
Deciding which of those
buckets $p$ falls into is equivalent
to deciding where $p$ lies in
relation to the three traditional
thresholds $0.001$, $0.01$ and $0.05$.

A natural error criterion for an algorithm $A$ is the risk of a
\textit{wrong} decision, defined as $\text{RR}_{p}(A)=\PP_p(p\notin I_A )$, and which we call the
\textit{resampling risk}. $\text{RR}_{p}(A)$ is a function of the p-value $p$.

The algorithms $A$ that we propose in this article  bound the
resampling risk uniformly in $p$ at a given $\epsilon \in (0,0.5)$, i.e.\
\begin{align}
  \text{RR}_p (A)\leq \epsilon \quad \text{ for all }p\in [0,1].
  \label{eq:RR}
\end{align}

\subsection{Overlapping buckets}
\label{sec:overlapping}
We say that the buckets $\cal J$ are \emph{overlapping} if for all
$p \in (0,1)$ there exists $J \in {\cal J}$ such that $p$ is contained
in the interior of $J$.  The following theorem shows that overlapping
buckets are both a necessary and sufficient prerequisite for a finite
time algorithm $A$ satisfying \eqref{eq:RR} to exist, where the effort is
measured in terms of the stopping time $\tau_{A}$.

\begin{theorem}
\label{theorem_existance}
The following statements are equivalent:
\begin{enumerate}
\setlength\itemsep{0em}
\item There exists an algorithm $A$ satisfying \eqref{eq:RR} with $\E_p(\tau_{A})<\infty$ for all $p \in [0,1]$.
  \item The p-value buckets $\cal J$  are overlapping.
  \item There exists an algorithm $A$ satisfying \eqref{eq:RR} with $\tau_{A}<C$ for some deterministic $C>0$.
  \end{enumerate}
\end{theorem}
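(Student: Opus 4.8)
The plan is to establish the cycle (2)$\Rightarrow$(3)$\Rightarrow$(1)$\Rightarrow$(2). The implication (3)$\Rightarrow$(1) is immediate, since $\tau<C$ almost surely gives $\E_p[\tau]\le C<\infty$ for every $p$.

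For (2)$\Rightarrow$(3) I would first extract a purely geometric consequence of overlapping: there is an $r>0$ such that every closed subinterval of $[0,1]$ of length at most $2r$ is contained in a single $J\in\mathcal J$. Indeed, each $x\in(0,1)$ lies in the open interior of some bucket, so a closed $r_x$-neighbourhood of $x$ lies in that bucket; and each $x\in\{0,1\}$ lies in a bucket of positive length, so a one-sided closed neighbourhood of $x$ lies in it. These neighbourhoods form an open cover of the compact set $[0,1]$, and a Lebesgue-number argument produces the uniform $r$. The algorithm then fixes $n$ with $2\exp(-2nr^2)\le\epsilon$, draws $X_1,\dots,X_n$, forms $\hat p=n^{-1}\sum_{i=1}^n X_i$, and outputs (under any fixed tie-break) a bucket $I$ with $[\hat p-r,\hat p+r]\cap[0,1]\subseteq I$. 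By Hoeffding's inequality $\PP_p(|\hat p-p|>r)\le 2\exp(-2nr^2)\le\epsilon$, and on the complementary event $p\in[\hat p-r,\hat p+r]\subseteq I$, so \eqref{eq:RR} holds; and $\tau=n$ is deterministic.

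For (1)$\Rightarrow$(2) I would argue the contrapositive. If $\mathcal J$ is not overlapping, fix $p_0\in(0,1)$ that lies in no bucket's interior. Since $\mathcal J$ is finite and covers $[0,1]$, a bucket covering some interval $(p_0-\delta_0,p_0)$ must have right endpoint exactly $p_0$ (else $p_0$ would be interior to it), and likewise on the other side; so the buckets meeting $p_0$ split into a nonempty left family $\mathcal L=\{J\in\mathcal J:\sup J=p_0\}$ and a nonempty right family $\mathcal R=\{J\in\mathcal J:\inf J=p_0\}$, which are disjoint, and for all small $\delta>0$ the buckets containing $p_0+\delta$ are exactly those in $\mathcal R$ and the buckets containing $p_0-\delta$ are exactly those in $\mathcal L$. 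Now let an algorithm satisfy \eqref{eq:RR} with $\E_{p_0}[\tau]<\infty$, so $\rho_k:=\PP_{p_0}(\tau>k)\to 0$. For fixed $k$, the event $\{I\in\mathcal R,\ \tau\le k\}$ depends only on $X_1,\dots,X_k$, and since $p_0\in(0,1)$ the likelihood ratio of $\mathrm{Bernoulli}(p_0+\delta)^{\otimes k}$ with respect to $\mathrm{Bernoulli}(p_0)^{\otimes k}$ is bounded between $(1+\delta/p_0)^{-k}$ and $(1+\delta/p_0)^{k}$. Using \eqref{eq:RR} at $p_0+\delta$ (where $p_0+\delta\notin J$ for every $J\in\mathcal L$), bounding $\PP_{p_0+\delta}(\tau>k)$ by $(1+\delta/p_0)^{k}\rho_k$, and changing measure on $\{I\in\mathcal R,\tau\le k\}$ back to $\PP_{p_0}$ yields $\PP_{p_0}(I\in\mathcal R,\tau\le k)\ge (1+\delta/p_0)^{-k}(1-\epsilon)-\rho_k$; letting $\delta\downarrow 0$ with $k$ fixed, and then $k\to\infty$, gives $\PP_{p_0}(I\in\mathcal R)\ge 1-\epsilon$, and symmetrically $\PP_{p_0}(I\in\mathcal L)\ge 1-\epsilon$. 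Since $\mathcal L\cap\mathcal R=\emptyset$ this forces $2(1-\epsilon)\le 1$, the required contradiction when $\epsilon<\tfrac12$.

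The step I expect to be the main obstacle is this last change of measure: one cannot transport probabilities between $\PP_{p_0+\delta}$ and $\PP_{p_0}$ directly on the random horizon $\tau$, because the Radon--Nikodym density grows geometrically in the number of observations. The fix is the order of limits --- truncate at a fixed $k$ using only finiteness of the effort under $\PP_{p_0}$, send $\delta\downarrow 0$ first so the density factor tends to $1$, and only then send $k\to\infty$ --- together with the routine bookkeeping that on $\{\tau\le k\}$ the output is measurable with respect to the first $k$ samples, and with attention to the fact that the resulting inequality $2(1-\epsilon)\le1$ only bites for $\epsilon<\tfrac12$, so the range of $\epsilon$ enters here. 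By comparison, the Lebesgue-number fact in (2)$\Rightarrow$(3) and the endpoint case analysis at $p_0$ are routine.
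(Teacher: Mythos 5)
Your proof is correct and establishes the same three implications as the paper, namely $(1)\Rightarrow(2)\Rightarrow(3)\Rightarrow(1)$; your $(2)\Rightarrow(3)$ and $(3)\Rightarrow(1)$ essentially coincide with the paper's (the paper uses Clopper--Pearson intervals together with the minimal gap $\Delta$ between bucket boundaries where you use Hoeffding intervals and a Lebesgue-number argument, and your covering fact is precisely the paper's Lemma~\ref{lemma:RL}). The genuine difference is in $(1)\Rightarrow(2)$. The paper tests $H_0:p=\alpha-1/n$ against $H_1:p=\alpha+1/n$ and invokes Wald's lower bound on the expected sample size of a sequential test to conclude that $\E_{\alpha+1/n}[\tau]\to\infty$ as $n\to\infty$. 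You instead work directly at the non-interior point $p_0$: truncate at a deterministic horizon $k$, use that the likelihood ratio between $\mathrm{Bernoulli}(p_0\pm\delta)^{\otimes k}$ and $\mathrm{Bernoulli}(p_0)^{\otimes k}$ tends to $1$ uniformly as $\delta\downarrow0$ with $k$ fixed, and deduce $\PP_{p_0}(I\in\mathcal{R})\geq 1-\epsilon$ and $\PP_{p_0}(I\in\mathcal{L})\geq 1-\epsilon$ for disjoint families $\mathcal{L},\mathcal{R}$, forcing $\epsilon\geq 1/2$. This buys several things: the argument is self-contained (no appeal to Wald's inequality); it uses only $\PP_{p_0}(\tau<\infty)=1$ rather than $\E_{p_0}[\tau]<\infty$, so it proves a slightly stronger implication; and it produces the contradiction at the single parameter value $p_0$ rather than via divergence of $\E_{\alpha+1/n}[\tau]$ along a sequence $p_n\to\alpha$, which in the paper's version still needs a word to be reconciled with the merely pointwise finiteness asserted in statement (1). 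The cost is the explicit restriction $\epsilon<1/2$, which you rightly flag; that restriction is in fact necessary for the theorem as a whole (for $\epsilon\geq 1/2$ a randomized zero-sample rule already satisfies \eqref{eq:RR} for two non-overlapping buckets) and is equally implicit in the paper's use of Wald's bound.

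One cosmetic slip: your claimed lower bound $(1+\delta/p_0)^{-k}$ on the likelihood ratio is not valid for all $p_0$ (it fails, e.g., when $p_0\geq 1/2$); a correct uniform lower bound is $\left(1-\delta/(1-p_0)\right)^{k}$. Since the argument only needs both bounds to tend to $1$ as $\delta\downarrow0$ with $k$ fixed, this does not affect the proof.
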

All proofs can be found in Appendix~\ref{section_proofs}. A
consequence of the theorem is that  there is no algorithm $A$ with
finite expected effort (i.e., $\E_p(\tau_{A})<\infty$ for all $p \in [0,1]$) that achieves \eqref{eq:RR} for ${\cal J}^0$.

To turn the classical buckets ${\cal J}^0$ into a set of  overlapping p-value buckets,
we can add  intervals that contain the classical thresholds in their interior.
As a specific choice, we recommend
$${\cal J^{\ast}}={\cal J}^0 \cup \left\{ (5\times 10^{-4},2\times 10^{-3}],(0.008,0.012],(0.045,0.055] \right\}.$$
We will use ${\cal J}^{\ast}$ throughout the article and refer to them as the \textit{extended buckets}.
We recommend ${\cal J}^{\ast}$ for three reasons:
First, this choice results in roughly an equal maximal effort when $p$ is close to  all three classical thresholds (see Example~\ref{example_Jstar}).
Second, the maximal effort and the expected effort under the null are reasonable in practical applications.
Third, the interval limits in $\cal J^{\ast}$ have only few decimal places and can thus be easily written down.
Section~\ref{section_discussion} discusses additional (heuristic) ways of choosing buckets.

\subsection{Extended star rating system}
\label{sec:starrating}
It is commonplace to report the significance of a hypothesis using a star rating system:
strong significance is encoded as *** ($p<0.1\%$), significance at $1\%$ is encoded as ** and weak significance ($p<5\%$) as a single star.
This classification,
recommended in the publication manual of the American Psychological Association \citep[page 139]{ASA2010},
is the de facto standard for reporting significance.

\begin{table}[tbp]
\centering
\begin{tabular}{L{13mm}| cc cc cc cc}
\hline\hline
Bucket & \multicolumn{2}{C{3cm}}{$[0,0.1\%]$} & \multicolumn{2}{C{3cm}}{$(0.1\%,1\%]$} & \multicolumn{2}{C{3cm}}{$(1\%,5\%]$} & \multicolumn{2}{C{3cm}}{$(5\%,1]$}\\
Code & \multicolumn{2}{C{3cm}}{***} & \multicolumn{2}{C{3cm}}{**} & \multicolumn{2}{C{3cm}}{*}\\
\hline
Bucket & \multicolumn{1}{C{13mm}}{} & \multicolumn{2}{C{3cm}}{$(0.05\%,0.2\%]$} & \multicolumn{2}{C{3cm}}{$(0.8\%,1.2\%]$} & \multicolumn{2}{C{3cm}}{$(4.5\%,5.5\%]$}\\
Code & \multicolumn{1}{C{13mm}}{} & \multicolumn{2}{C{3cm}}{**\ps} & \multicolumn{2}{C{3cm}}{*\ps} & \multicolumn{2}{C{3cm}}{\ps}\\
\hline\hline
\end{tabular}
\caption{Extended star rating system for $\cal J^\ast$.\label{tab:report}}
\end{table}

We propose to extend the star rating system for the overlapping buckets
in ${\cal J}^{\ast}$ in the manner given in Table~\ref{tab:report} (referred to as the \textit{extended star rating system}).
The same coding could be used for other p-value buckets that contain
${\cal J}^0$.  If the p-value bucket $I$ returned by our algorithm
allows for a clear decision with respect to the classical thresholds
(first row of Table~\ref{tab:report}), we report the classical star
rating.  Otherwise, we propose to report significance with respect to
the smallest classical threshold larger than $\max I$ and to indicate
the possibility of a higher significance with a tilde symbol (second
row of Table~\ref{tab:report}).

For instance,
suppose an algorithm returns the bucket $I=(0.05\%,0.2\%]$ for $p$ upon stopping.
This implies  $p \leq 1\%$ and thus we can  safely report a ** significance.
However,
as $p$ could either be smaller or larger than the next classical threshold $0.1\%$,
we report **\ps~to indicate the possibility of a higher significance.

\subsection{The general construction}
\label{section_general_algorithm}
We suppose that  we can compute a confidence sequence  $\Cn$, $n\in \N$,
for $p$, i.e.\ a sequence of intervals  $\Cn$ such that its joint
coverage probability is at least
$1-\epsilon$, where $\epsilon>0$ is the desired uniform bound on the resampling
risk. Formally, we require
\begin{align}
\PP_p(p\in \Cn~\text{for all}~n \in \N)\geq 1-\epsilon \quad \text{for all } p\in [0,1].
\label{eq:covprob}
\end{align}
In Sections~\ref{sec:robbinslai} and~\ref{sec:simctest} we consider two constructions satisfying \eqref{eq:covprob}.

The generic algorithm we propose will depend on the choice of
$p$-value buckets ${\cal J}$ and the method $C$ for computing a
confidence sequence. We will denote the algorithm by $A({\cal J}, C)$.
We define the stopping time
\begin{align}
\tau_{A({\cal J}, C)}=\inf \left\{ n\in \N: \text{there exists}~J \in {\cal J}~\text{such that}~\Cn\subseteq J\right\}
\label{eq:stopping_rule}
\end{align}
which denotes the minimal number of samples $n$ needed until a
confidence interval $\Cn$ is fully contained in a bucket $J \in {\cal J}$.
If $\tau_{A({\cal J}, C)}<\infty$, the result of our algorithm is a bucket $I\in {\cal J}$ such that $\Cn \subseteq I$.
If multiple buckets exist with this property then an arbitrary one is chosen.
If $\tau_{A({\cal J}, C)}=\infty$, our algorithm returns an arbitrary element $I \in {\cal J}$ such that $\lim_{n\to\infty}\frac{S_n}{n} \in I$,
where $S_n = \sum_{i=1}^n X_i$. The limit exists by the law of large numbers. 

\begin{theorem}
  If \eqref{eq:covprob} holds then $A=A({\cal J}, C)$ satisfies \eqref{eq:RR}.
\end{theorem}

This is an immediate consequence of the construction and  the strong law of large numbers.

The confidence interval $\Cn$ for $p$ and the bucket $I \in {\cal J}$ that our algorithm returns are related but not equivalent.
Following \cite{BoosStefanski2011}, we are ultimately only interested in reporting one of the pre-specified p-value buckets that $p$ falls in;
a more precise confidence statement on $p$ is not required.
The confidence interval $\Cn$ for $p$ serves to quantify the uncertainty in the estimation of $p$,
and since $\Cn \subseteq I$ it ensures that the bucket $I$ we report satisfies \eqref{eq:RR}.

Lastly, if there exists $N\in \N$ such that $\tau_{A({\cal J}, C)}<N$, we can relax \eqref{eq:covprob} to
\begin{equation}
  \label{eq:covprobfinite}
  \PP_p(p\in \Cn~\text{for all}~n < N)\geq 1-\epsilon
  \quad \text{for all } p\in [0,1].
\end{equation}

\begin{figure}[tbp]
\includegraphics[width=0.49\textwidth]{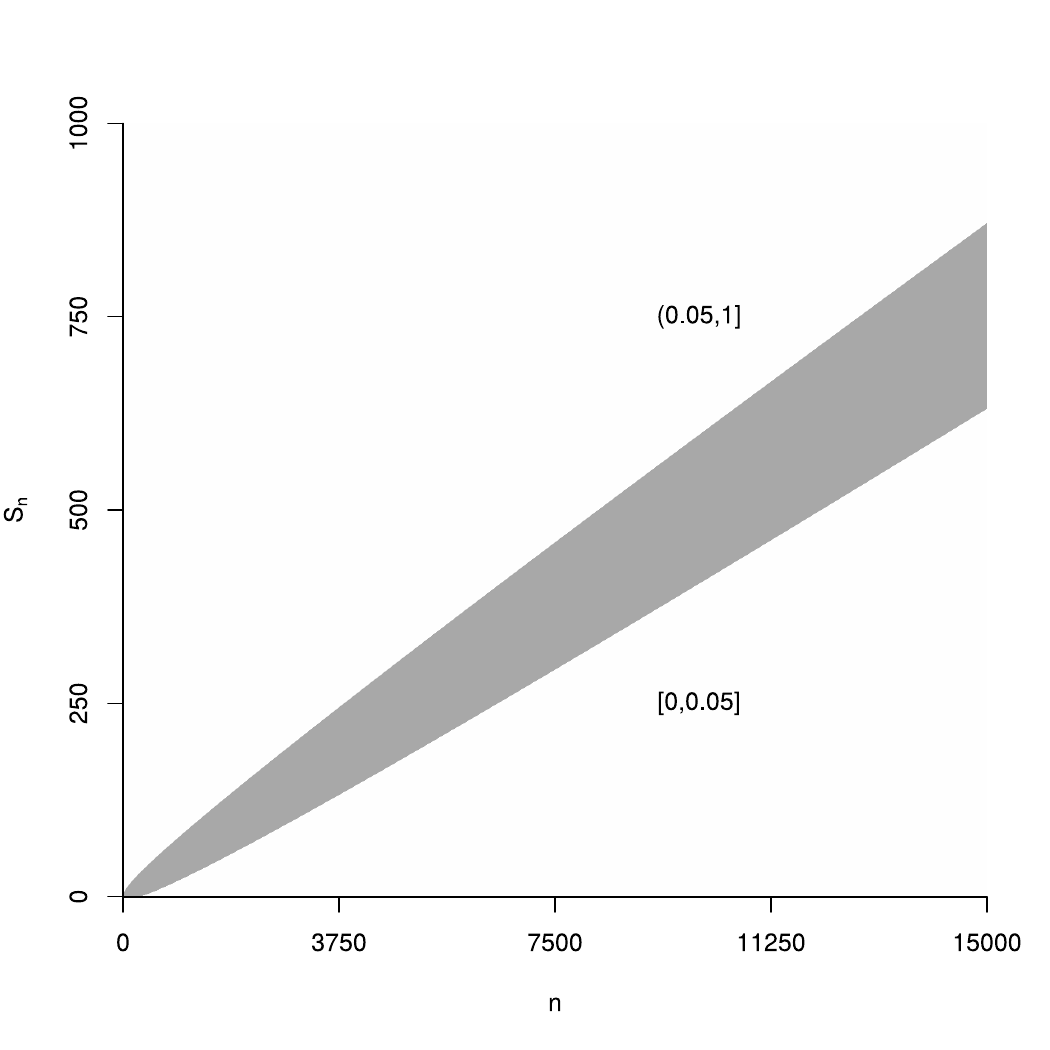}\hfill
\includegraphics[width=0.49\textwidth]{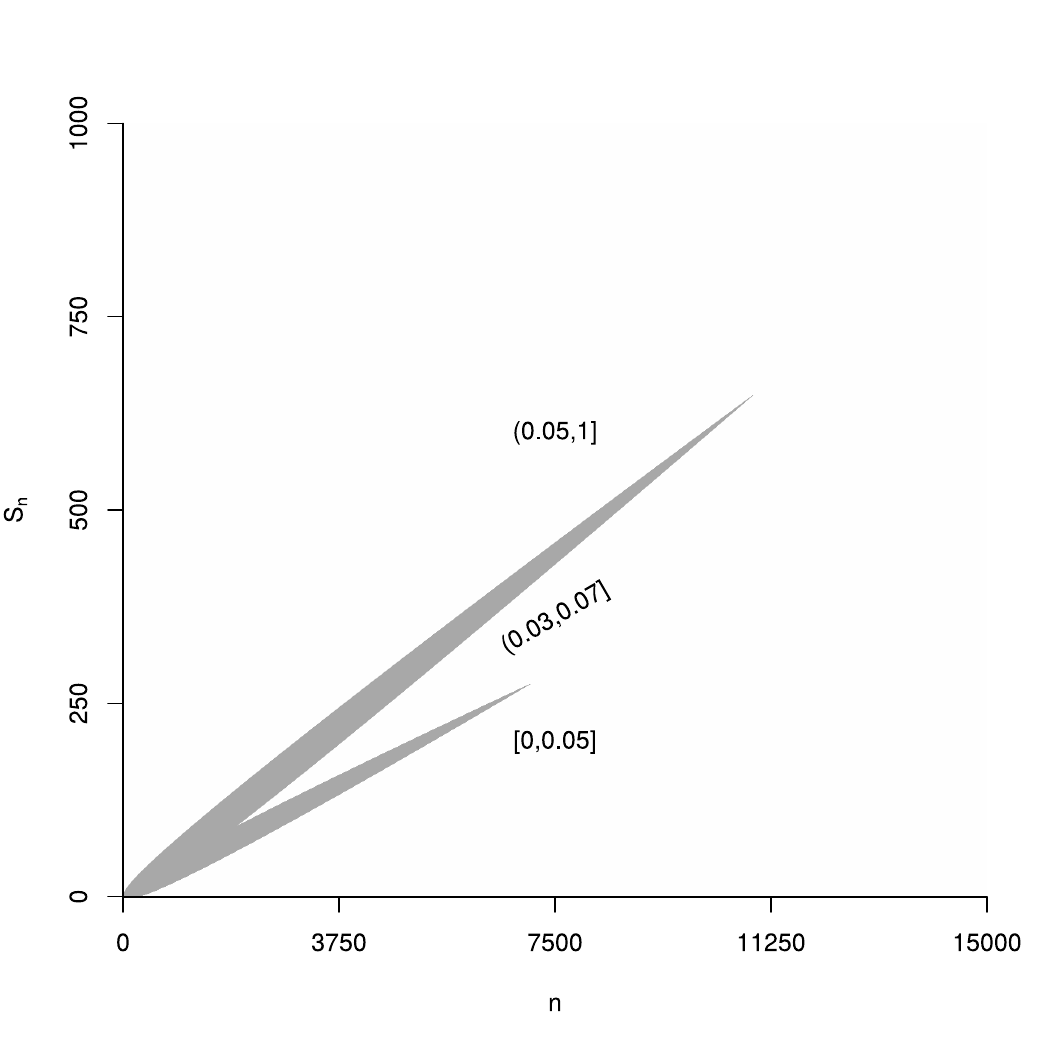}
\caption{Non-stopping region (grey) to compute a decision on $p$ with respect to $J^e$ (left), which corresponds to a $5\%$ threshold,
and with respect to the overlapping buckets $J^e \cup \{ (0.03,0.07] \}$ (right).\label{fig:basicidea}}
\end{figure}

\begin{example}
\label{example_basicidea}
Suppose we are solely interested in the 5\% threshold.
Testing at 5\% corresponds to the two classical buckets ${\cal J}^e = \{ [0,0.05],(0.05,1] \}$.
Using the approach of Section~\ref{sec:simctest} with $\epsilon=10^{-3}$ to compute a confidence sequence for $p$,
we arrive at the non-stopping region displayed in Figure~\ref{fig:basicidea} (left).
We define the non-stopping region as the region in which sampling progresses until the sampling path $(n,S_n)$ hits either its lower or upper boundary.
As displayed in Figure~\ref{fig:basicidea} (left),
we report the interval $[0,0.05]$ $\left( (0.05,1] \right)$ upon hitting the lower (upper) boundary first.

Adding the bucket $(0.03,0.07]$ to ${\cal J}^e$ results in overlapping buckets
with a finite non-stopping region displayed in Figure~\ref{fig:basicidea} (right).
In Figure~\ref{fig:basicidea} (right), the sample path can leave the non-stopping region in three ways:
Either to the top via the former upper boundary of Figure~\ref{fig:basicidea} (left),
in which case we report the classic interval $(0.05,1]$,
to the bottom via the former lower boundary corresponding to the bucket $[0,0.05]$,
or to the middle corresponding to the added bucket $(0.03,0.07]$.
\end{example}

\begin{figure}[tbp]
\centering
\includegraphics[width=0.49\textwidth]{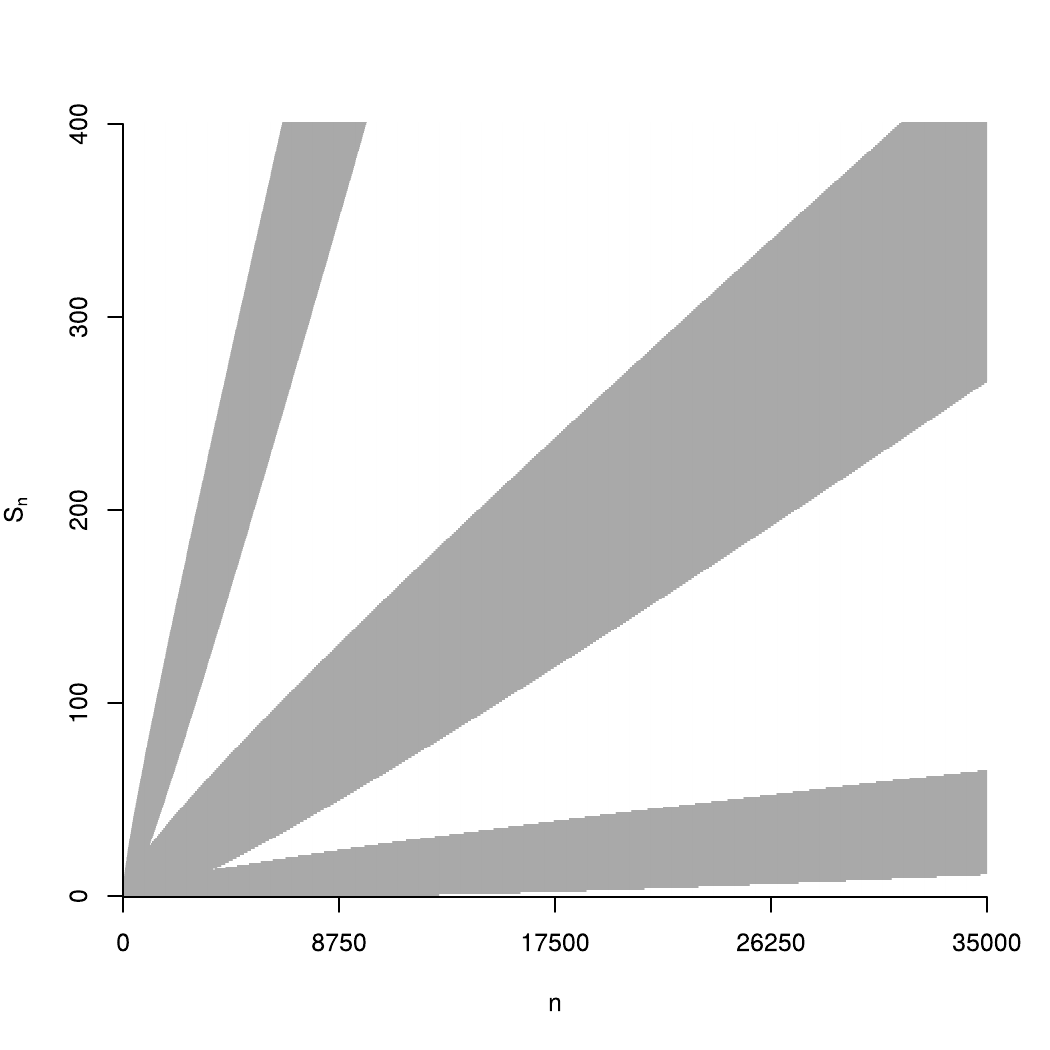}\hfill
\includegraphics[width=0.49\textwidth]{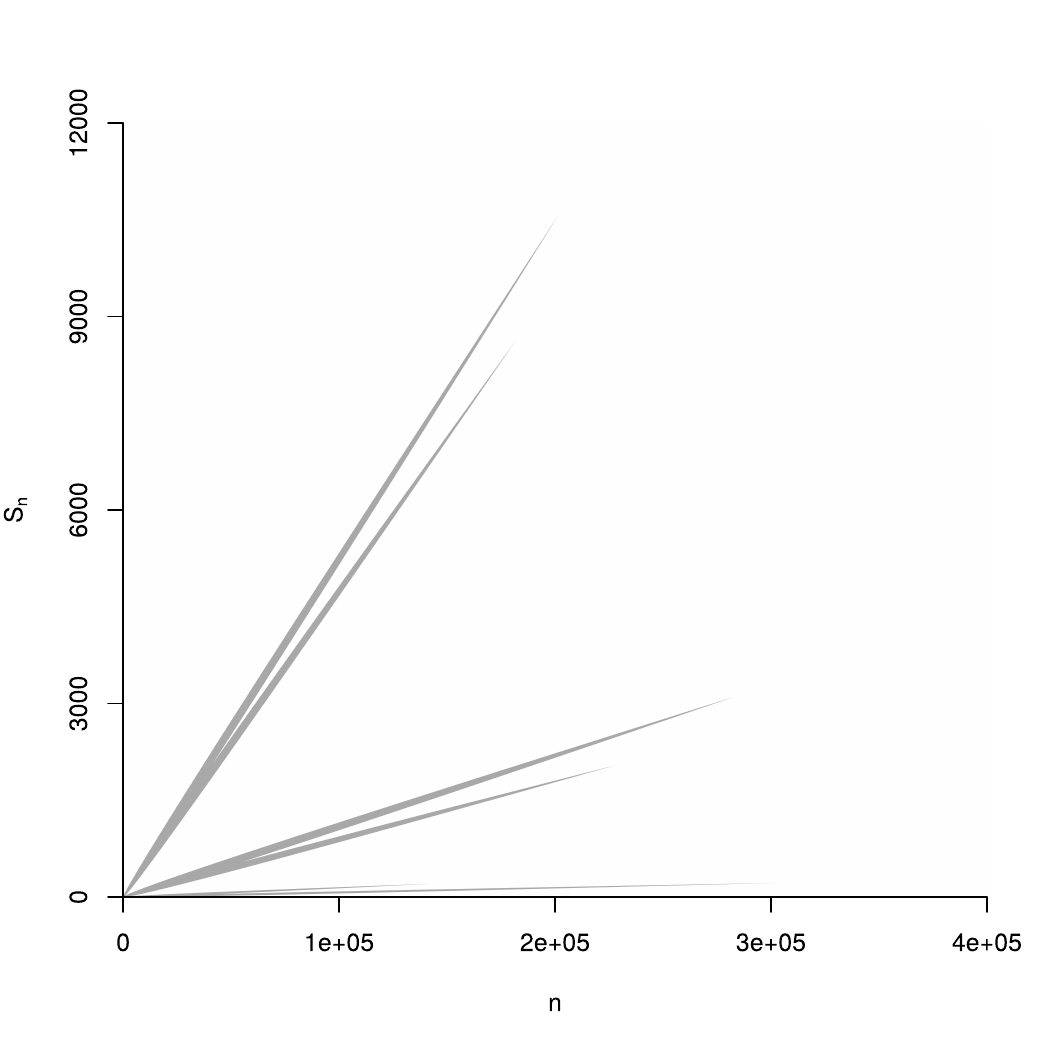}
\caption{Non-stopping region (gray) for ${\cal J}^0$ (left) and ${\cal J}^\ast$ (right).\label{fig:stoppingregion}}
\end{figure}

\begin{figure}[tbp]
\centering
\includegraphics[width=0.45\textwidth]{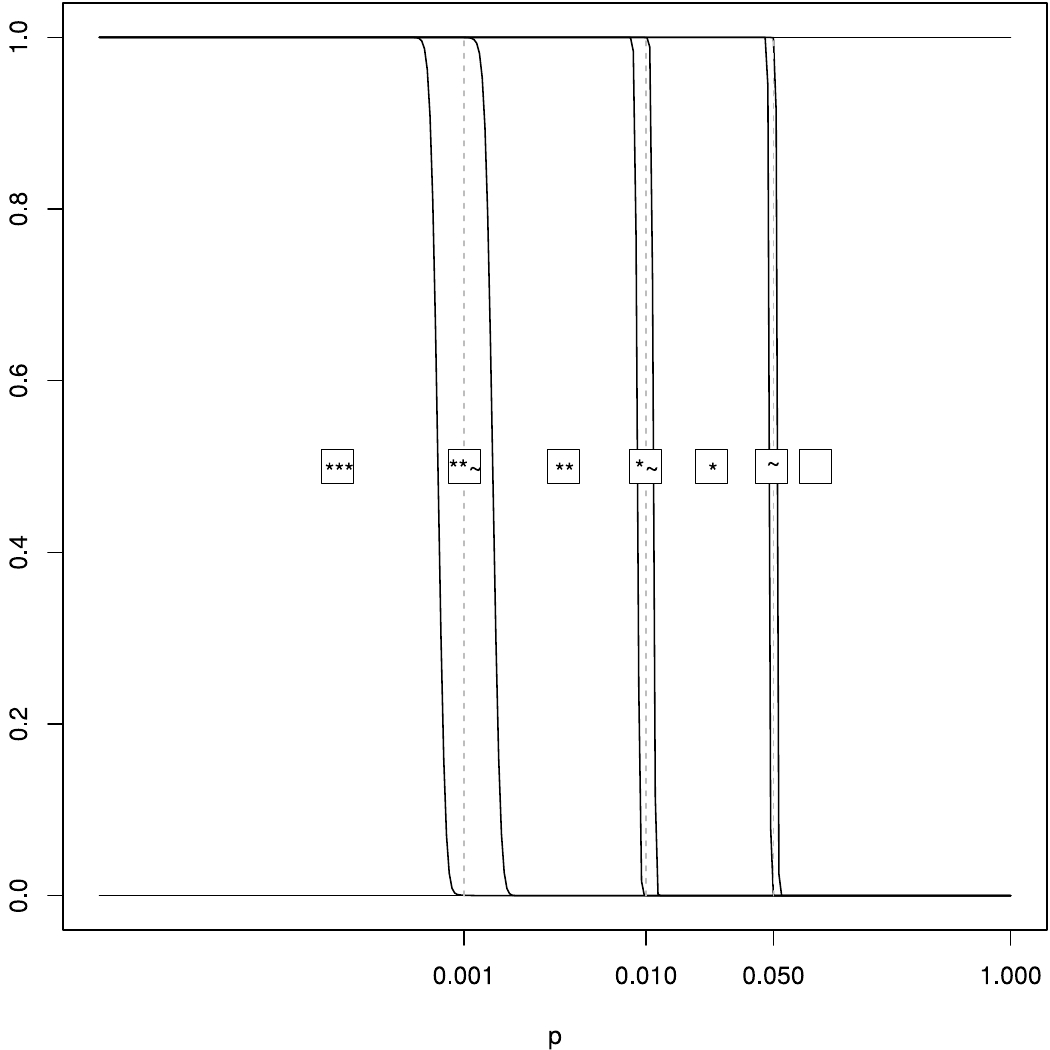}
\caption{Probabilities of observing each possible decision for $\cal J^\ast$ as a function of $p$.\label{fig:outcomes}}
\end{figure}

\begin{example}
\label{example_Jstar}
Similarly to Example~\ref{example_basicidea},
Figure~\ref{fig:stoppingregion} shows the non-stopping region for
${\cal J}^0$ and $\cal J^\ast$. The stopping region is infinite
for the non-overlapping ${\cal J}^{0}$ and finite for the overlapping buckets ${\cal J}^{\ast}$.

How likely is it to observe the different decisions which can occur when testing with ${\cal J}^\ast$?
Figure~\ref{fig:outcomes} shows the probability of obtaining each decision in the extended star rating
system for ${\cal J}^\ast$ as a function of $p$.  These probabilities are computed as
follows: For a given $p$, we iteratively (over $n$) compute the
distribution of $S_{n}$ conditional on not stopping. This allows us to
compute the probability of stopping
and the resulting decision.

Figure~\ref{fig:outcomes} shows that intermediate decisions (\ps, *\ps, **\ps)
only occur with appreciable probability for a narrow range of p-values.
For most p-values, a decision in the sense of the classical star rating system is reached.
\end{example}

\section{Construction of confidence sequences}
\label{sec:confseq}
We now present two approaches for computing confidences sequences and show that, for overlapping buckets, the resulting stopping times are bounded.

\subsection{The Robbins-Lai approach}
\label{sec:robbinslai}
\cite{Robbins1970} showed that the sequence of sets
$$\CnRL = \{ p \in [0,1]: (n+1)b(n,p,S_n)>\epsilon\}$$
satisfies \eqref{eq:covprob}, where $b(n,p,s) = \binom{n}{s} p^{s} (1-p)^{n-s}$ (see eq.~\eqref{eq:conf}).
\cite{Lai1976} showed that $\CnRL$ are intervals.
Using these intervals with overlapping buckets leads to a bounded effort:
\begin{lemma}
  \label{le:finitestop}
  If ${\cal J}$ are overlapping buckets then the stopping time
  $\tau_{A({\cal J}, C_\text{RL})}$ can be bounded by a deterministic positive constant.
\end{lemma}

The intervals $\CnRL$ need not be computed explicitly in order to check \eqref{eq:stopping_rule}.
Appendix~\ref{section_simple_criterion_RL} gives a simple criterion to check if $\CnRL \subseteq J$ for  $J \in {\cal J}$.

\subsection{The Simctest approach}
\label{sec:simctest}
\cite{Gandy2009} provides a method to compute a decision for $H_0$
with respect to a single threshold in the same Monte Carlo setting as the one of Section~\ref{section_setting}.
This approach can also be used to construct confidence sequences for multiple thresholds.

For the purposes of this article, it suffices to mention that for a given threshold $\alpha\in [0,1]$,
\cite{Gandy2009} constructs two integer valued stopping boundaries $(L_{n,\alpha})_{n\in \N}$ and $(U_{n,\alpha})_{n\in \N}$, and defines a stopping time
$$\tau_{\alpha} = \inf \{k \in \N: S_k \geq U_{k,\alpha} \text{ or } S_k \leq L_{k,\alpha} \}.$$
The construction is parametrized by a spending sequence $(\epsilon_n)_{n\in\N}$ that is nonnegative, nondecreasing and converges to some $0<\rho<1$.
 \cite[Theorem 1]{Gandy2009}  shows that, under conditions,
$\E_p(\tau_{\alpha})<\infty$ for $p\neq \alpha$ and that the probability of hitting the wrong
boundary is bounded by $\rho$, i.e.\ $\PP_p(S_{\tau_{\alpha}} \geq U_{\tau_{\alpha},\alpha})<\rho$ for $p<\alpha$, and similarly for $p>\alpha$.

In order to extend this approach to multiple thresholds,
we first define the set of boundaries of intervals in $\cal J$ that are in the interior of $[0,1]$:
$$B_{\cal J} = \{\min J,~\max J:~J \in {\cal J}\}\setminus \{0,1\}.$$
Then, for each $\alpha \in B_{\cal J}$ we construct the stopping boundaries $L_{n,\alpha}$ and $U_{n,\alpha}$ using the same $\rho$.
We define
$$I_{n,\alpha}=
  \begin{cases}
    [0,1]&\text{if }n<\tau_{\alpha},\\
    [0,\alpha)&\text{if }n\geq\tau_{\alpha}, S_{\tau_{\alpha}}\leq  L_{\tau_{\alpha},\alpha},\\
    (\alpha,1]&\text{if }n\geq\tau_{\alpha}, S_{\tau_{\alpha}}\geq  U_{\tau_{\alpha},\alpha}.
  \end{cases}$$
We define the confidence sequence of the Simctest approach as $\CnS=\bigcap_{\alpha \in B_{\cal J}} I_{n,\alpha}$.

The following theorem shows that $\CnS$ has the desired joint coverage probability given in \eqref{eq:covprob}
(or \eqref{eq:covprobfinite} for overlapping buckets) when setting $\rho=\epsilon/2$.
Moreover, the theorem shows that the algorithm $A({\cal J},C_\text{S})$ has a bounded stopping time if $\cal J$ is a finite set of overlapping buckets.

\begin{theorem}
\label{theorem_2epsilonNEW}
Let $\epsilon\in (0,1)$. For each $\alpha \in B_{\cal J}$, construct $L_{n,\alpha}$ and $U_{n,\alpha}$ with error probability $\rho=\epsilon/2$.
Let $N\in \N\cup \{\infty\}$.
Suppose that $U_{n,\alpha} \leq U_{n,\alpha'}$ and $L_{n,\alpha} \leq L_{n,\alpha'}$ for all $\alpha, \alpha' \in B_{\cal J}$, $\alpha < \alpha'$, and $n < N$.

\begin{enumerate}
  \item \label{thm_epsilon_part1}
  Then $\PP_p(p\in \CnS \text{ for all } n < N) \geq 1-\epsilon \quad  \text{for all }p\in [0,1].$
  \item \label{thm_epsilon_part2} Suppose $N=\infty$,  $\rho \leq 1/4$ and $\log(\epsilon_n - \epsilon_{n-1}) = o(n)$ as $n \rightarrow \infty$.
  If ${\cal J}$ is a finite set of overlapping $p$-value buckets then there exists $c<\infty$ such that $\tau_{A({\cal J},C_\text{S})} \leq c$.
\end{enumerate}
\end{theorem}
Allowing $N<\infty$ in Theorem~\ref{theorem_2epsilonNEW} is useful for stopping boundaries constructed to yield a finite runtime (see \eqref{eq:covprobfinite}).

The condition on the spending sequence in part~\ref{thm_epsilon_part2} of Theorem~\ref{theorem_2epsilonNEW}
is identical to the condition imposed in Theorem~1 of \cite{Gandy2009}.
It is satisfied by the \textit{default spending sequence} defined in \cite{Gandy2009} as $\epsilon_n=\rho n/(n+k)$ with $k=1000$,
which is also employed in the remainder of this article.

The condition on the monotonicity of the boundaries
($U_{n,\alpha} \leq U_{n,\alpha'}$ and $L_{n,\alpha} \leq L_{n,\alpha'}$
for all $n \in \N$ and $\alpha, \alpha' \in B_{\cal J}$ with $\alpha < \alpha'$)
can be checked for a fixed spending sequence $(\epsilon_n)_{n \in \N}$ in two ways:
For finite $N$, the two inequalities can be checked manually after constructing the boundaries.
For $N=\infty$, the following lemma shows that under conditions, the
monotonicity of the boundaries holds true for all $n\geq n_0$, where
$n_0 \in \N$ can be computed as a solution to inequality
\eqref{eq:def_no} given in the proof of Lemma~\ref{le:eventual_ineqbounds} in Appendix~\ref{section_proofs}.

\begin{lemma}
\label{le:eventual_ineqbounds}
Suppose $\rho \leq 1/4$ and $\log(\epsilon_n - \epsilon_{n-1}) = o(n)$ as $n \rightarrow \infty$.
Let $\alpha,\alpha' \in B_{\cal J}$ with  $\alpha < \alpha'$.
Then there exists $n_{0} \in \mathbb{N}$ such that for all $n \geq n_0$,
$$L_{n,\alpha} \leq L_{n,\alpha'} \quad \text{and} \quad U_{n,\alpha} \leq U_{n,\alpha'}.$$
\end{lemma}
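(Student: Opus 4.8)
The plan is to control each Simctest boundary by a line of slope equal to its threshold with a sublinear error, and to organise the comparison so that on one side a \emph{tight} bound suffices while on the other a \emph{crude} one does. Fix $\alpha<\alpha'$ in $A_{\cal J}$, read \eqref{eq:LUsequences} with threshold $\alpha$, stopping time $\sigma_\alpha=\inf\{k:S_k\ge U_{k,\alpha}\text{ or }S_k\le L_{k,\alpha}\}$ and boundaries $U_{n,\alpha},L_{n,\alpha}$, and set $\beta:=(\alpha+\alpha')/2$. For the \emph{tight} (outer) bound, unrolling \eqref{eq:LUsequences} one step shows that the error committed to the upper side before step $n$, namely $\PP_\alpha(\sigma_\alpha<n,\,S_{\sigma_\alpha}\ge U_{\sigma_\alpha,\alpha})$, is at most $\epsilon_{n-1}$, so $U_{n,\alpha}$ is the least $j$ with $\PP_\alpha(\sigma_\alpha\ge n,\,S_n\ge j)$ below a budget of at least $\epsilon_n-\epsilon_{n-1}$; discarding the event $\{\sigma_\alpha\ge n\}$ and applying Hoeffding's inequality to $S_n\sim\mathrm{Bin}(n,\alpha)$ (and symmetrically for $L_{n,\alpha}$) yields
\begin{equation}
\label{eq:outerbound}
n\alpha-s_n\;\le\;L_{n,\alpha}\;\le\;U_{n,\alpha}\;\le\;n\alpha+s_n,\qquad
s_n:=\sqrt{\tfrac{n}{2}\,\log\tfrac{1}{\epsilon_n-\epsilon_{n-1}}}+1 .
\end{equation}
The hypothesis $\log(\epsilon_n-\epsilon_{n-1})=o(n)$ is precisely what makes $s_n=o(n)$; for the default spending \eqref{eq:defaultspending} one has $s_n=O(\sqrt{n\log n})$.

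For the \emph{crude} (inner) bound I would show $U_{n,\alpha'}>n\beta$ (and, symmetrically, $L_{n,\alpha}<n\beta$) for all large $n$. Suppose $U_{n,\alpha'}\le n\beta$. By \eqref{eq:LUsequences} the probability of crossing the upper $\alpha'$-boundary exactly at step $n$ satisfies $\PP_{\alpha'}(\sigma_{\alpha'}\ge n,\,S_n\ge U_{n,\alpha'})\le\epsilon_n\le\rho$. On the other hand, \eqref{eq:LUsequences} also gives that each boundary is ever hit with probability at most $\rho$ under $\PP_{\alpha'}$, so $\PP_{\alpha'}(\sigma_{\alpha'}\ge n)\ge\PP_{\alpha'}(\sigma_{\alpha'}=\infty)\ge1-2\rho$; and since $\beta<\alpha'$, Hoeffding gives $\PP_{\alpha'}(S_n<n\beta)\to0$ exponentially. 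Hence, using $U_{n,\alpha'}\le n\beta$ and $\PP(A\cap B)\ge\PP(A)-\PP(B^{c})$,
$\PP_{\alpha'}(\sigma_{\alpha'}\ge n,\,S_n\ge U_{n,\alpha'})\ge(1-2\rho)-\PP_{\alpha'}(S_n<n\beta)=(1-2\rho)-o(1)$.
For $\rho\le1/4$ one has $1-2\rho>\rho$, so the right-hand side exceeds $\rho\ge\epsilon_n$ for large $n$, a contradiction. The statement for $L_{n,\alpha}$ is obtained the same way from $\PP_\alpha$ and the lower boundary, using $\alpha<\beta$.

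Combining the two bounds, whenever $n$ is large enough that $s_n<\tfrac12 n(\alpha'-\alpha)$ we get $U_{n,\alpha}\le n\alpha+s_n<n\beta<U_{n,\alpha'}$ and $L_{n,\alpha}<n\beta<n\alpha'-s_n\le L_{n,\alpha'}$, which is the assertion of the lemma. Making the two ``large $n$'' requirements explicit via \eqref{eq:outerbound} and the Hoeffding bound $\PP_{\alpha'}(S_n<n\beta)\le e^{-n(\alpha'-\alpha)^2/2}$ produces the inequality
\begin{equation}
\label{eq:def_no}
\sqrt{\tfrac{n}{2}\,\log\tfrac{1}{\epsilon_n-\epsilon_{n-1}}}+1\;<\;\tfrac12\,n\,(\alpha'-\alpha)
\qquad\text{and}\qquad
e^{-n(\alpha'-\alpha)^2/2}\;<\;1-3\rho ,
\end{equation}
and one may take $n_0$ to be the least integer such that \eqref{eq:def_no} holds for every $n\ge n_0$ (both displayed quantities being eventually monotone in $n$, such an $n_0$ exists and is computable from $\alpha,\alpha',\rho$ and the spending sequence).

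I expect the inner bound to be the only genuinely delicate step, and it is exactly where $\rho\le1/4$ enters: under $\PP_{\alpha'}$ the algorithm stops with probability at most $2\rho$, so nothing softer than the probabilistic lower bound $\PP_{\alpha'}(\sigma_{\alpha'}=\infty)\ge1-2\rho$ seems to work, and the argument closes only because $1-2\rho>\rho$. The rest is bookkeeping — verifying that the budget in \eqref{eq:LUsequences} really is at least $\epsilon_n-\epsilon_{n-1}$ so that $s_n$ in \eqref{eq:outerbound} is controlled, and pinning down the constants in \eqref{eq:def_no}. For the finitely many $n<n_0$ the two inequalities of the lemma are, as the surrounding text notes, verified directly from the computed boundaries.
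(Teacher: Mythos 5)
Your proof is correct, and its first half is exactly the paper's argument: the ``outer'' bound $L_{n,\alpha}\le U_{n,\alpha}\le n\alpha+s_n$ and $L_{n,\alpha'}\ge n\alpha'-s_n$ with $s_n=\sqrt{-n\log(\epsilon_n-\epsilon_{n-1})/2}+1$ is precisely the estimate the paper imports from the proof of Theorem~1 of \cite{Gandy2009} (you supply the unrolling-plus-Hoeffding derivation that the paper only cites). Where you diverge is in the final comparison. The paper simply chains $L_{n,\alpha}\le U_{n,\alpha}\le n\alpha+s_n\le n\alpha'-s_n\le L_{n,\alpha'}\le U_{n,\alpha'}$, using nothing beyond the trivial inequalities $L_{n,\cdot}\le U_{n,\cdot}$; your own display already contains every ingredient for this, so your additional ``inner'' bound $L_{n,\alpha}<n\beta<U_{n,\alpha'}$ is redundant. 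That inner bound is nevertheless proved correctly (the survival probability $\PP_{\alpha'}(\sigma_{\alpha'}=\infty)\ge 1-2\rho$ versus the per-step budget $\epsilon_n\le\rho$ does force $U_{n,\alpha'}>n\beta$ eventually), but your closing remark that it is ``the only genuinely delicate step'' and that ``nothing softer seems to work'' is mistaken: the paper's chaining shows the step can be bypassed entirely, and with it the only place where $\rho\le 1/4$ genuinely enters your argument. The practical consequence is that your computable threshold in your version of \eqref{eq:def_no} carries a superfluous second condition ($e^{-n(\alpha'-\alpha)^2/2}<1-3\rho$); the paper's $n_0$ is determined by the single inequality $2(\Delta_n+1)/n\le\alpha'-\alpha$, which coincides with your first condition.
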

For $n<n_0$, the inequalities again have to be checked manually.

\section{Computational effort}
\label{sec:effort}
This section investigates the expected computational effort of the
algorithm of Section~\ref{section_general_algorithm}.  We start by
deriving a theoretical lower bound on the expected effort in
Section~\ref{sec:lowerbound}.  We then compare both the Simctest and
Robbins-Lai approach of Section~\ref{sec:confseq} in terms of their
expected effort as a function of $p$
(Section~\ref{section_effortplots}).  Integrating this effort for
certain p-value distributions of practical interest allows us to compare
both approaches in practical situations
(Section~\ref{section_three_distributions}).  Section
\ref{sec:multiple_testing} shows that the algorithm can be used for
small p-values arising in multiple testing settings.

\subsection{Lower bounds on the expected effort}
\label{sec:lowerbound}

In this section we construct lower bounds on the expected number of
steps of sequential procedures satisfying \eqref{eq:RR}. The key idea is to consider hypothesis tests  implied
by \eqref{eq:RR} and then to use the lower bounds for the expected
effort of sequential tests \cite[eq.~(4.80)]{wald1945}.

\begin{theorem}
\label{th:lowerbounds}
Let $\tau$ be the number of steps taken by a sequential procedure returning  $I\in {\cal J}$ which respects \eqref{eq:RR}.
Then, for every $\tilde p\in [0,1]$,
\begin{equation}
  \label{eq:lowerbound1}
  \E_{\tilde p}(\tau)\geq\sup_{q\notin \tilde J}  e(\tilde p, q, \epsilon,\epsilon),
\end{equation}
where $\tilde J=\bigcup_{J\in {\cal J}, \tilde p\in J}J$ is the
union of all buckets containing $\tilde p$ and
$$e(p,q,\alpha,\beta)=\frac{(1-\alpha)\log(\beta/(1-\alpha))+\alpha\log((1-\beta)/\alpha)}{p\log(q/p)+ (1-p)\log((1-q)/(1-p))}.$$
Furthermore, if $\tilde p\in [0,1]$ is such that exactly two  elements of ${\cal J}$
contain $\tilde p$, say $J_1$ and $J_2$, then
\begin{equation}
  \label{eq:lowerbound2}
  \E_{\tilde p}(\tau) \geq  \min_{\eta\in [0,1]} \max\left\{ \sup_{q\notin J_1}e(\tilde p, q,1-\eta,\epsilon),\sup_{q\notin J_2}e(\tilde p, q,\min(\eta+\epsilon,1),\epsilon) \right\} .
\end{equation}
\end{theorem}

We call the bound given by \eqref{eq:lowerbound1} the basic lower
bound and the bound given by the maximum of \eqref{eq:lowerbound1} and
\eqref{eq:lowerbound2} the improved lower bound.
The suprema  in \eqref{eq:lowerbound1} and \eqref{eq:lowerbound2} can be evaluated by looking at
the boundary points of $\tilde J$, $J_1$ and $J_2$. The minimum can
be bounded from below by looking at a grid of values for
$\eta$ and by conservatively replacing
$e(\tilde p, q,\eta+\epsilon,\epsilon)$ by $e(\tilde p, q,\eta+\epsilon+\delta,\epsilon)$, where $\delta$ is the grid
width. This is because $e$ is decreasing in its third argument.

\begin{figure}[tbp]
\centering
\includegraphics[width=0.65\textwidth]{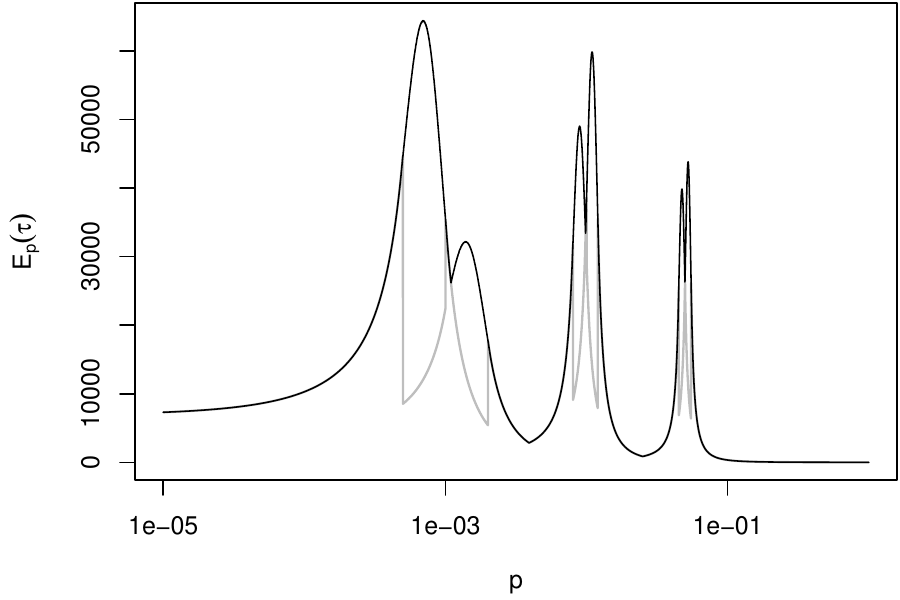}
\caption{Basic (grey) and improved (black) lower bounds on the effort $\E_p(\tau)$ for ${\cal J}^\ast$.\label{fig:lowerboundexample}}
\end{figure}

Figure~\ref{fig:lowerboundexample}
gives an example of both the basic and the improved lower bounds on $\E_{\tilde p}(\tau)$ for the extended buckets ${\cal J}^\ast$.
The improved bound is much higher (and thus better) in the areas where there are overlapping buckets.

\subsection{Expected effort for (non-)overlapping buckets}
\label{section_effortplots}

This section investigates both the classical buckets $\cal J$
as well as the extended buckets ${\cal J}^\ast$
with respect to the implied expected effort as a function of $p$.

\begin{figure}[tbp]
\includegraphics[width=0.49\textwidth]{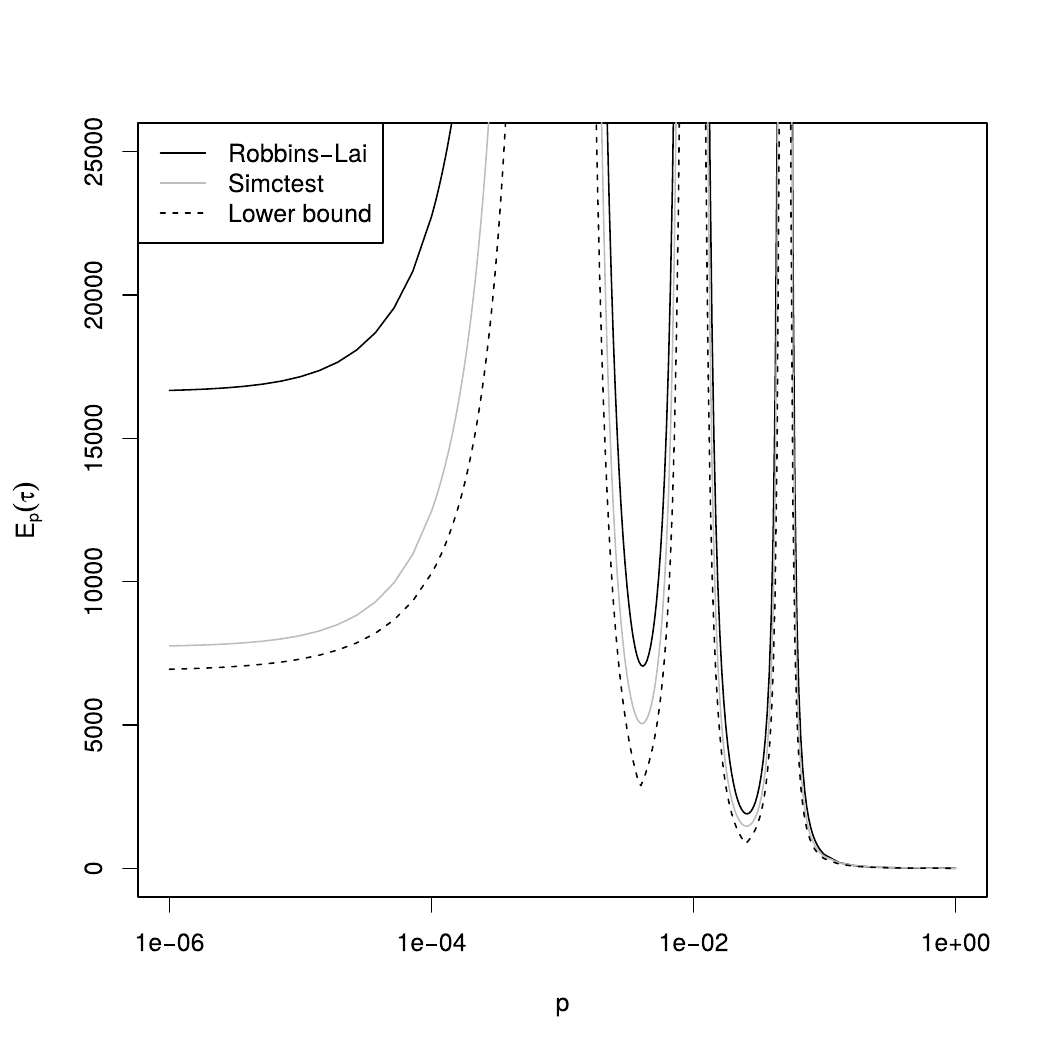}\hfill
\includegraphics[width=0.49\textwidth]{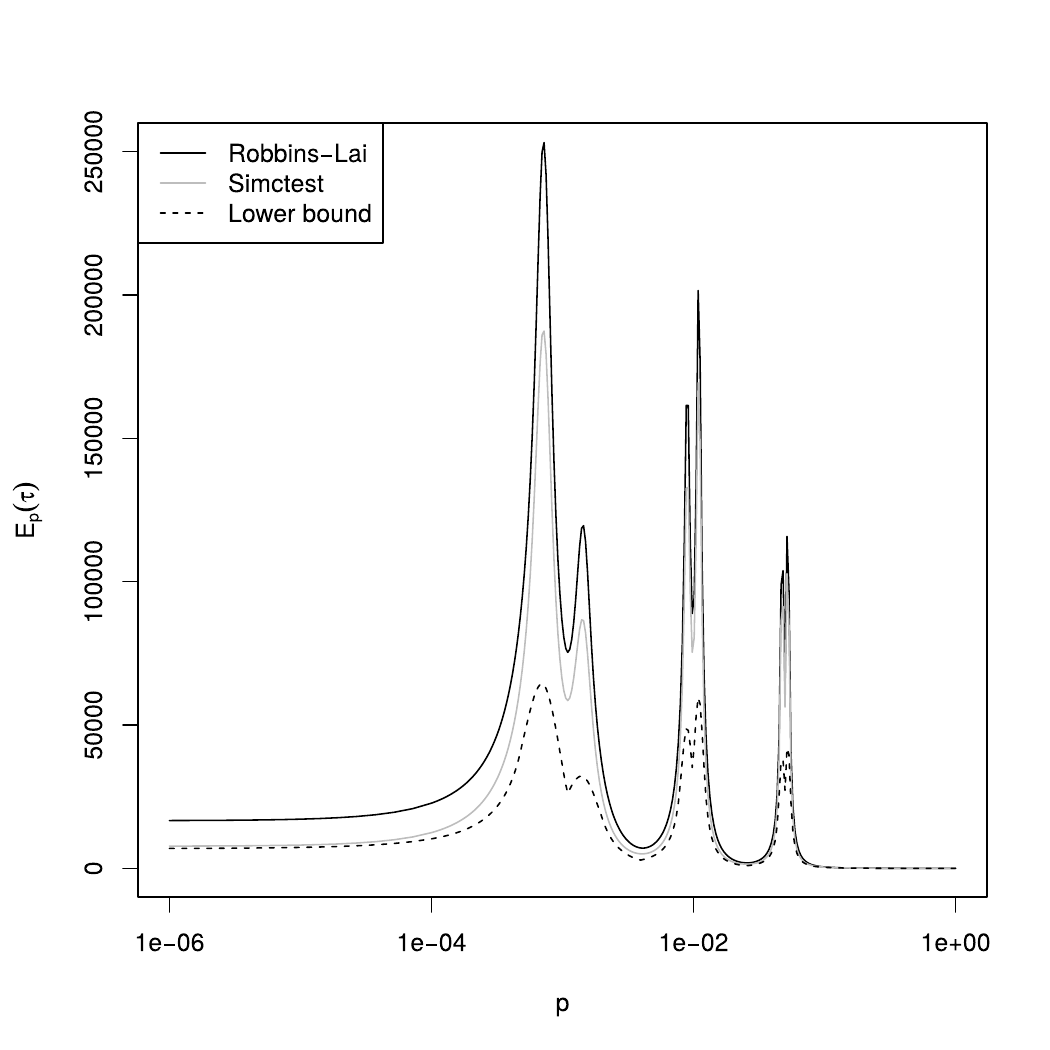}
\caption{Expected effort to compute a decision with respect to ${\cal J}^0$ (left) and ${\cal J}^\ast$ (right) as a function of $p$.
Confidence sequences computed with both Simctest (grey) and Robbins-Lai (black).
Lower bound on the effort indicated with a dashed line.\label{fig:effort}}
\end{figure}

Using the non-stopping regions depicted in Figure~\ref{fig:stoppingregion},
Figure~\ref{fig:effort} shows the expected effort (measured in terms of the number of samples drawn)
to compute a decision with respect to $\cal J$ (left) and ${\cal J}^\ast$ (right) as a function of $p \in [10^{-6},1]$.
For any given $p$, the expected effort is computed
by iteratively (over $n$) updating the distribution of $S_n$ conditional on not having stopped up to time $n$.
Using this distribution,
we work out the probability of stopping at step $n$ and add the appropriate contribution to the overall effort.
For both the Robbins-Lai and the Simctest approach,
the files \textit{RL.cpp} and \textit{simctest.cpp} included in the Supplementary Material
contain an implementation that computes the effort for a fixed set of p-value buckets.

The effort diverges as $p$ approaches any of the thresholds in $\cal J$.
For $\cal J^\ast$ the effort stays finite even in the case that $p$ coincides with one of the thresholds (Figure~\ref{fig:effort}, right).
The effort is maximal in a neighborhood around each threshold, while in-between thresholds, the effort slightly decreases.
For p-values larger than the maximal threshold in $\cal J$ or $\cal J^\ast$ the effort decreases to zero.
The effort for Simctest seems to be uniformly smaller than the one for Robbins-Lai for both $\cal J$ and ${\cal J}^\ast$.

Figure~\ref{fig:effort} also shows the lower bound (dashed line) on the effort derived in Section~\ref{sec:lowerbound}.
Using Simctest,
the effort of our algorithm of Section~\ref{section_general_algorithm}
differs from the theoretical lower bound by only a small factor.

\subsection{Expected effort for three specific p-value distributions}
\label{section_three_distributions}

The expected effort  of the proposed methods for repeated use
can be obtained by integrating the expected effort for a fixed $p$
(see Figure~\ref{fig:effort}, right) with respect to certain p-value distributions.

Here, we consider using the extended buckets ${\cal J}^{\ast }$
with three different p-value distributions.  These are a uniform
distribution in the interval $[0,1]$ ($H_0$), as well as two
alternatives given by the density $\frac{1}{2}+10\I(x \leq 0.05)$
($H_{1a}$) and by a Beta$(0.5,25)$ distribution ($H_{1b}$), where $\I$
denotes the indicator function.

\begin{table}[tbp]
\centering
\begin{tabular}{lrrr}
  \hline\hline
 & Robbins-Lai & Simctest & Lower bound \\ 
  \hline
  $H_0$ & 2228 & 1853 & 975 \\ 
  $H_{1a}$ & 16878 & 13837 & 7126 \\ 
  $H_{1b}$ & 40059 & 30896 & 15885 \\ 
   \hline\hline
\end{tabular}
\caption{Expected (integrated) effort for both Robbins-Lai and Simctest applied to ${\cal J^\ast}$.\label{tab:effort}}
\end{table}

Table~\ref{tab:effort} shows the expected effort as well as the lower bound on the expected effort.
The Simctest approach (Section~\ref{sec:simctest})
dominates the one of Robbins-Lai (Section~\ref{sec:robbinslai})
for this specific choice of distributions.
As expected, the effort is lowest for a uniform p-value distribution,
and more extreme for the alternatives having higher probability mass on low p-values.
Using Simctest,
the expected effort stays within roughly a factor of two of the theoretical lower bound derived in Section~\ref{sec:lowerbound}.

\subsection{Application to multiple testing}
\label{sec:multiple_testing}
We consider the applicability of our algorithm of Section~\ref{section_general_algorithm} to the (lower) testing thresholds
occurring in multiple testing scenarios.
In the following example, we demonstrate that our algorithm is well suited as a screening procedure for the most significant hypotheses.
Even for small threshold values, it is capable of detecting more rejections than a na\"ive sampling procedure that uses an equal number of samples for each hypothesis.

We assume we want to test $n=10^4$ hypotheses using the \cite{Bonferroni1936} correction to correct for multiplicity.
In order to be able to compute numbers of false classifications,
we assign $n_\text{alt}=100$ hypotheses to the alternative, the remaining $n-n_\text{alt}=9900$ hypotheses are from the null.
The p-values of the alternative are then set to $1-F(X)$,
where $F$ is the cumulative distribution function of a Student's $t$-distribution with $100$ degrees of freedom
and $X$ is a random variable sampled from a $t$-distribution with $100$ degrees of freedom
and noncentrality parameter uniformly chosen in $[2,6]$.
The p-values of the null are sampled uniformly in $[0,1]$.

In order to screen hypotheses, we aim to group them by the order of magnitude of their p-values.
For this we employ the overlapping buckets
$${\cal J}^s = \left\{ \left[0,10^{-7}\right] \right\} \cup \left\{ \left(10^{i-2},10^{i}\right]: i = -6,\ldots,0 \right\}$$
which group the p-values in buckets spanning two orders of magnitude each (and  $\left[0,10^{-7}\right]$).

We apply our algorithm $A({\cal J}^s,C_\text{S})$ of Section~\ref{section_general_algorithm} to ${\cal J}^s$
using confidence sequences computed with the Simctest approach (Section~\ref{sec:simctest}) and parameter $\epsilon=10^{-3}$.
To speed up the Monte Carlo sampling,
we sample in batches of geometrically increasing size $\lfloor a^i b \rfloor$ in each iteration $i \in \N$,
where $b=10$ and $a=1.1$.
Likewise, both the stopping boundaries and the stopping condition (hitting of either boundary) in Simctest are updated and checked in batches of the same size.

We now report the results from a single run of this setup. Our algorithm draws $N=3.2 \times 10^5$ samples per hypothesis.
Of the $10^4$ hypotheses, $28$ are correctly allocated to the two lowest buckets.
As expected, the p-values from the null are all allocated to larger buckets (covering values from $10^{-4}$ onwards).

An alternative approach would be to draw an equal number of $N$ samples per hypothesis
and to compute a p-value using a pseudo-count \citep{davison1997bootstrap}.
Due to this pseudo-count, this na\"ive approach is incapable of observing p-values below $(N+1)^{-1} = 3.125 \times 10^{-6}$ (see also \cite{GandyHahn2017}),
and in particular incapable of observing any p-values in the two lowest buckets.

\section{Discussion}
\label{section_discussion}
The overlapping p-value buckets presented in Section~\ref{sec:overlapping}
were chosen to be easily written down and to yield an equal maximal effort for all classical thresholds as well as a reasonable expected effort.
However, these criteria are essentially arbitrary.
A variety of further (heuristic) criteria can be used to obtain overlapping buckets from traditional testing thresholds $T=\{ t_0,\ldots,t_m \}$.
These include:
\begin{enumerate}
  \setlength\itemsep{1em}
  \item The bucket overlapping each threshold $t \in T$ can be chosen as $[\rho t, \rho^{-1}t]$ for a fixed proportion $\rho \in (0,1)$.
  \item Since the length of a confidence interval for a binomial quantity (with success probability $p$) behaves proportionally to
  $\sqrt{p(1-p)} \in O \left( \sqrt{p} \right)$ as $p \rightarrow 0$,
  we can define a bucket for $t \in T$ as $J_{t,\rho}=[t-\rho\sqrt{t},t+\rho\sqrt{t}]$,
  where $\rho>0$ is chosen such that $0 \notin J_{t,\rho}$.
  \item The buckets can be chosen to match the precision of a na\"ive sampling method which draws a fixed number of samples $n \in \N$ per hypothesis.
  For this we compute all $n+1$ possible confidence intervals
  (one for each possible $S_n \in \{0,\ldots,n\}$)
  for each threshold $t \in T$ and record all confidence intervals which cover $t$.
  The union of those intervals can then be used as a bucket for $t$.
\end{enumerate}
The tuning parameter $\rho$ can be chosen, for instance, to minimize the maximal (worst case) effort of the resulting overlapping buckets.

The article leaves scope for a variety of future research directions.
For instance, how can the overlapping p-value buckets be chosen to maximize the probability of obtaining a classical decision (*, ** or ***),
subject to a suitable optimization criterion?
How can the lower bound on the computational effort derived in Section~\ref{sec:lowerbound} be improved?
Which algorithm (possibly based on our generic algorithm) is capable of meeting the effort of the lower bound?


\appendix
\section*{Appendix}

\section{Proofs}
\label{section_proofs}

\begin{proof}[Proof of Theorem~\ref{theorem_existance}]
We prove a circular equivalence of the three statements.

$(1.) \Rightarrow (2.)$: 
Suppose the buckets $\cal J$ are not overlapping.
This implies that there exists $\alpha\in (0,1)$ which is not contained in the interior of any $J\in {\cal J}$.
Let $I\in {\cal J}$ be the (random) interval reported by algorithm $A$ which satisfies \eqref{eq:RR}.
Let $n\in \N$ such that $\alpha-1/n\geq 0$ and $\alpha+1/n\leq 1$.

Consider the hypotheses $H_0: p=\alpha-1/n$ and $H_1: p=\alpha+1/n$
and the test that rejects $H_0$ iff $\alpha-1/n\notin I$.  As $I$
cannot contain both $\alpha-1/n$ and $\alpha+1/n$ (otherwise $\alpha$
would be in the interior of the interval $I$) and because of \eqref{eq:RR}, this
test has type I and type II error of at most $\epsilon$.  Hence, by the
lower bound on the expected number of steps of a sequential test given
in \cite[eq.~(4.81)]{wald1945}, see also \cite[section~3.1]{Gandy2009}, we have
$$
\E_{\alpha+1/n}(\tau)\geq \frac{\epsilon\log\left(\frac{\epsilon}{1-\epsilon}\right)+(1-\epsilon)\log\left(\frac{1-\epsilon}{\epsilon}\right)}
{\left(\alpha+\frac{1}{n}\right)\log\left(\frac{\alpha+1/n}{\alpha-1/n}\right)+
\left(1-\alpha-\frac{1}{n}\right)\log\left(\frac{1-\alpha-1/n}{1-\alpha+1/n}\right)}.
$$
As $n\to\infty$, the right hand side converges to $\infty$, contradicting (1.).

$(2.) \Rightarrow (3.)$:
We construct an explicit (but not very efficient) algorithm for this.

Let $a_0<a_1<\cdots<a_k$ be the ordered boundaries of the buckets in ${\cal J}$, i.e.\
$\{a_0,\dots,a_k\}=\{\max J: J\in{\cal J}\}\cup \{\min J: J\in{\cal J}\}$.
Let $\Delta=\min\{a_{i}-a_{i-1}:i=1,\dots,k\}$ be the minimal gap between those boundaries.

Let $I(S,n)$ be the two-sided \cite{Clopper1934} confidence interval
with coverage probability $1-\epsilon$ for $p$, where $n\in \N$ is the number
of samples and $S$ is the number of exceedances observed among those $n$ samples.
Let $n$ be such that the length of all Clopper-Pearson intervals is less than $\Delta$,
i.e.\ $n=\min \{ m\in \N: |I(S,m)|<\Delta~\text{for all}~S\in \{0,\dots,m\} \}$.
This is well-defined as the length of the Clopper-Pearson confidence
interval $I(S,n)$ decreases to 0 uniformly in $S$ as $n\to \infty$
(see e.g.\ the proof of Condition~2 in Lemma~2 of \cite{GandyHahn2014}).

Consider the algorithm that takes $n$ samples $X_1,\dots,X_n$ and then
returns an arbitrary interval $I\in {\cal J}$ that satisfies
$I\supseteq I(\sum_{i=1}^nX_i,n)$ (to be definite, order all elements
in ${\cal J}$ arbitrarily and return the first element satisfying the
condition). Such an $I$ always exists as the buckets are overlapping by (2.)\
and as $|I(\sum_{i=1}^nX_i,n)|<\Delta$, implying that it overlaps with at
most one possible boundary.
This algorithm satisfies \eqref{eq:RR} due to the coverage probability of $1-\epsilon$ of the Clopper-Pearson interval.

$(3.) \Rightarrow (1.)$:
Since finite effort implies expected finite effort, (1.) follows immediately.
\end{proof}

\begin{proof}[Proof of Lemma~\ref{le:finitestop}]
We first prove that the length of $\CnRL$ uniformly goes to zero.
The bounded stopping time then follows after proving that once an interval is below a certain length,
it is guaranteed to be contained in one of the buckets.

If $0\leq p \leq S_n/n-\left[ \log((n+1)/\epsilon)/(2n) \right]^{1/2}$ then, by Hoeffding's inequality \citep{Hoeffding1963},
$$b(n,p,S_n) = \PP(X=S_n) \leq \PP\left(\frac{X}{n}-p\geq \frac{S_n}{n}-p\right) \leq \exp \left( \frac{-2(S_n-np)^2}{n} \right) \leq \frac{\epsilon}{n+1},$$
where  $X \sim \text{Binomial}(n,p)$.
Hence,  $p \notin \CnRL$.

A similar argument shows that $b(n,p,S_n) \leq \epsilon/(n+1)$ for
$S_n/n+\left[ \log((n+1)/\epsilon)/(2n) \right]^{1/2} \leq p \leq 1$.
Thus, $|\CnRL| \leq \left[ 2 \log( (n+1)/\epsilon )/n \right]^{1/2}$.

Now assume no $c>0$ exists such that any interval $I\subseteq [0,1]$ with length less than $c$ is contained in a $J\in {\cal J}$.
Then for all $n \in \N$ there exists an interval $\CnRL \subset [0,1]$ with $0<|\CnRL|<1/n$ such that $\CnRL \not\subseteq J$ for all $J\in {\cal J}$.
Let $a_n$ be the mid point of $\CnRL$. As $(a_n)$ is a bounded sequence, there exists a convergent subsequence $(a_{n_k})$. Let $b=\lim_{k\to\infty}a_{n_k}$.

If $b\in (0,1) $ then, as ${\cal J}$ is overlapping, there exists
$\epsilon>0$ and $J\in {\cal J}$ such that $(b-\epsilon,b+\epsilon)\subseteq J$.
For large enough $k$  we have  $C_\text{RL}(X_{1:n_k}) \subseteq (b-\epsilon, b+\epsilon)$, contradicting $C_\text{RL}(X_{1:n_k}) \not \subseteq J$.

If $b=0$ then, as ${\cal J}$ is a covering of $[0,1]$ consisting of intervals of positive length, there exists $\epsilon>0$ and
$J\in {\cal J}$ such that $[0,\epsilon)\subseteq J$.
For large enough  $k$ we have $C_\text{RL}(X_{1:n_k}) \subseteq [0,\epsilon)$, again contradicting $C_\text{RL}(X_{1:n_k}) \not \subseteq J$.
If $b=1$, a contradiction can be derived similarly.
\end{proof}

\begin{proof}[Proof of Theorem~\ref{theorem_2epsilonNEW}]
\begin{enumerate}[wide]
  \item For threshold $\alpha\in B_{\cal J}$, let
  $\overline{E}_\alpha^{N} = \left\{ S_{\tau_\alpha} \geq U_{\tau_\alpha,\alpha}, \tau_\alpha < N \right\}$
  be the event that the upper boundary is hit first before time $N$
  and let
  $\underline{E}_\alpha^{N} = \left\{ S_{\tau_\alpha} \leq L_{\tau_\alpha,\alpha}, \tau_\alpha < N \right\}$
  be the event that the lower boundary is hit first.
  Then, for all $\alpha,\alpha' \in B_{\cal J}$ with $\alpha<\alpha'$,
  \begin{align}
  \overline{E}_{\alpha}^N \supseteq \overline{E}_{\alpha'}^N \quad\text{and}\quad
  \underline{E}_\alpha^N \subseteq \underline{E}_{\alpha'}^N.
  \label{eqn:monotonicity_boundaries}
  \end{align}
  Indeed, to see
  $\overline{E}_{\alpha}^N \supseteq \overline{E}_{\alpha'}^N$, we can argue as follows.
  On the event $\overline{E}_{\alpha'}^N$, as  $U_{n,\alpha} \leq U_{n,\alpha'}$ for all $n \in \N$,
  the trajectory $(n, S_n)$ must hit the upper boundary $U_{n,\alpha}$ of $\alpha$ no later than $\tau_{\alpha'}$,
  hence $\tau_{\alpha} \leq \tau_{\alpha'}<N$.
  It remains to prove that the trajectory does not first hit the lower boundary $L_{n,\alpha}$ of $\alpha$.
  Indeed, if the trajectory does hit the lower boundary of $\alpha$ before hitting its upper boundary,
  it also hits the lower boundary of $\alpha'$ (as $L_{n,\alpha} \leq L_{n,\alpha'}$ for all $n <N$)
  before time $\tau_{\alpha'}$,
  thus contradicting being on the event $\overline{E}_{\alpha'}^N$.
  Hence, we have $\overline{E}_\alpha^N \supseteq \overline{E}_{\alpha'}^N$.
  The proof of $\underline{E}_\alpha^N \subseteq \underline{E}_{\alpha'}^N$ is similar.

  Using this notation, for all $p\in [0,1]$,
  \begin{align}
  \nonumber
  \PP_p(\text{there exists}~n< N: p\notin \CnS)
  \nonumber
  &\leq \PP_p(\text{there exist}~n< N, \alpha \in B_{\cal J}: p\notin I_{n,\alpha})\\
  \nonumber
  &=\PP_p\left( \bigcup_{\alpha\in B_{\cal J}:\alpha\leq p}\underline{E}_\alpha^N \cup
    \bigcup_{\alpha\in B_{\cal J}:\alpha\geq p}\overline{E}_\alpha^N\right)\\
  &\leq \PP_p\left( \bigcup_{\alpha\in B_{\cal J}:\alpha\leq p}\underline{E}_\alpha^N\right)
    +\PP_p\left( \bigcup_{\alpha\in B_{\cal J}:\alpha\geq p}\overline{E}_\alpha^N\right).
  \label{le:splithitting}
  \end{align}
  If $p < \min B_{\cal J}$, the first term is equal to $0$.
  Otherwise, let $\alpha'=\max\{\alpha\in B_{\cal J}:\alpha\leq p\}$.
  Then, by \eqref{eqn:monotonicity_boundaries},
  $$
  \PP_p\left(\bigcup_{\alpha\in B_{\cal J}:\alpha\leq p}\underline{E}_\alpha^N\right)
  =\PP_p\left(\underline{E}_{\alpha'}^N\right)\leq\rho.
  $$
  The second term on the right hand side of \eqref{le:splithitting} can be dealt with similarly.
  
  \item By \eqref{eq:step1} and as $\Delta_n = o(n)$ there exists $n_0 \in \N$ such that
  \begin{equation}
    \label{eq:boundundecided}
    |\{\alpha \in B_{\cal J}: \tau_{\alpha} > n_0\}|\leq 1.
  \end{equation}
  We will show that $\tau_{A({\cal J},C_\text{S})} \leq n_0$.
  First, the assumption on the ordering of $L_n$ and $U_n$ excludes the possibility that $C_\text{S}(X_{1:n_0})=\emptyset$.
  Second, \eqref{eq:boundundecided} implies $|C_\text{S}(X_{1:n_0}) \cap B_{\cal J}|\leq 1$.

  If $|C_\text{S}(X_{1:n_0}) \cap B_{\cal J}|= 1$ then let $\alpha \in B_{\cal J}$ be such that $\alpha \in C_\text{S}(X_{1:n_0})$.
  As ${\cal J}$ is overlapping, there exist $J\in {\cal J}$ such that $\alpha$ is in the interior of $J$. 
  Hence, $\alpha$ cannot be a boundary of $J$, implying $C_\text{S}(X_{1:n_0}) \subseteq J$ due to $|C_\text{S}(X_{1:n_0}) \cap B_{\cal J}|=1$,
  thus showing $\tau_{A({\cal J},C_\text{S})} \leq n_0$.

  If $|C_\text{S}(X_{1:n_0}) \cap B_{\cal J}|= 0$ then let $\beta$ be in the interior of $C_\text{S}(X_{1:n_0})$.
  As ${\cal J}$ is overlapping, there exists $J\in {\cal J}$ such that $\beta\in J$.
  As $C_\text{S}(X_{1:n_0}) \cap B_{\cal J}=\emptyset$ this implies $C_\text{S}(X_{1:n_0}) \subseteq J$, thus showing $\tau_{A({\cal J},C_\text{S})} \leq n_0$.\qedhere
\end{enumerate}
\end{proof}

\begin{proof}[Proof of Lemma~\ref{le:eventual_ineqbounds}]
By arguments in \cite[Proof of Theorem 1]{Gandy2009}, we have 
\begin{align}
\frac{U_{n,\alpha} - n \alpha}{n} \leq \frac{\Delta_n+1}{n} \rightarrow 0,\qquad
\frac{L_{n,\alpha'} - n \alpha'}{n} \geq -\frac{\Delta_n+1}{n} \rightarrow 0,
\label{eq:step1}
\end{align}
as $n \rightarrow \infty$, where $\Delta_n = \sqrt{-n \log (\epsilon_n-\epsilon_{n-1})/2}$.
Since $\Delta_n = o(n)$ there exists $n_0 \in \N$ such that
\begin{equation}
\label{eq:def_no}
2 \left( \frac{\Delta_n}{n} + \frac{1}{n} \right) \leq \alpha' - \alpha \text{ for all }n \geq n_0. 
\end{equation}
Splitting $\frac{2}{n} = \frac{1}{n}+\frac{1}{n}$ and multiplying by $n$ yields
$n\alpha + \Delta_n + 1 \leq n\alpha' - \Delta_n - 1$
from which $U_{n,\alpha} \leq L_{n,\alpha'}$ follows by \eqref{eq:step1}.

By definition, we have $L_{n,\alpha} \leq U_{n,\alpha}$ and
$L_{n,\alpha'} \leq U_{n,\alpha'}$ for all $n\in\N$, thus implying
$L_{n,\alpha} \leq L_{n,\alpha'}$ and $U_{n,\alpha} \leq U_{n,\alpha'}$
for all $n \geq n_0$ as desired.
\end{proof}

\begin{proof}[Proof of Theorem~\ref{th:lowerbounds}]
We suppose that $I\in {\cal J}$ is the (random) bucket reported by a sequential
algorithm that respects \eqref{eq:RR}.  Let $\tilde p\in [0,1]$.  For any
$q\in [0,1]\setminus \tilde J$, we can consider the hypotheses
$H_0: p=\tilde p$ against $H_1:p=q$ and the test that rejects $H_0$
if and only if $\tilde p\notin I$.  By \eqref{eq:RR}, the type I error of such a test is
at most $\epsilon$.  Also, the type II error is at most $\epsilon$, as $q\notin \tilde J$ implies
$\PP_q(\tilde p\in I)\leq \PP_q(q\notin I)\leq \epsilon$.
Hence, using the lower bound in \cite[eq.~(4.80)]{wald1945}, we get \eqref{eq:lowerbound1}.

To see \eqref{eq:lowerbound2}: 
For any
$q\in [0,1]\setminus J_1$ consider the hypotheses $H_0:p=\tilde p$
and $H_1:p=q$ and the  test that rejects $H_{0}$ if and only if
$I\neq J_1$. This test has type I error $1-\eta$, where $\eta=\PP_{\tilde p}(I= J_{1})$, and type II error of at most
$\epsilon$.  Using \cite[eq.~(4.80)]{wald1945} we get
$\E_{\tilde p}(\tau) \geq  e(\tilde p, q,1-\eta,\epsilon)$.  Similarly, for any $q\in [0,1]\setminus J_2$, we
can test the hypotheses $H_0:p=\tilde p$ and $H_1:p=q$ by rejecting
$H_{0}$ if and only if $I\neq J_2$. This test has type I error of at most
$\min(\eta+\epsilon,1)$ and type II error of at most $\epsilon$.  Again, using
\cite[eq.~(4.80)]{wald1945} we get 
$\E_{\tilde p}(\tau)\geq e(\tilde p, q,\min(\eta+\epsilon,1),\epsilon)$.
Eq.~\eqref{eq:lowerbound2} follows as these inequalities hold for all $q$ and due to the fact that we can account for the unknown $\eta$ by minimizing over it.
\end{proof}

\section{A simple stopping criterion for Robbins-Lai}
\label{section_simple_criterion_RL}
The following describes a simple criterion to determine whether a confidence interval computed via the Robbins-Lai approach of Section~\ref{sec:robbinslai}
is fully contained in a bucket. For a single threshold this approach has been suggested in \cite{Ding2016}.
Let interval $\CnRL$ and bucket $J \in {\cal J}$ as well as $n$, $S_n$ and $\epsilon$ be as in Sections~\ref{section_general_algorithm} and \ref{sec:robbinslai}.
Then $\CnRL \subseteq J$ if and only if for $p \in \{\min J,\max J\}$,
\begin{align}
(n+1) b(n,S_n,p) = (n+1) \binom{n}{S_n} p^{S_n} (1-p)^{n-S_n} \leq \epsilon.
\label{eq:conf}
\end{align}
As \eqref{eq:conf} is also satisfied if $\CnRL$ and $J$ are simply disjoint,
we verify that $(n+1) b(n,S_n,p)$ is indeed increasing at $\min J$ and decreasing at $\max J$ using the derivative of $(n+1) b(n,S_n,p)$ with respect to $p$.

After applying a (monotonic) log transformation to \eqref{eq:conf}, taking the derivative with respect to $p$ yields
\begin{align}
\frac{S_n}{p} - \frac{n-S_n}{1-p}
\begin{cases}
\geq 0 & p=\min J,\\
\leq 0 & p=\max J.
\end{cases}
\label{eq:deriv}
\end{align}
If \eqref{eq:conf} and \eqref{eq:deriv} are satisfied, then $\CnRL \subseteq J$.

\end{document}